\DeclareMathOperator*{\slim}{s-lim}
\DeclareMathOperator{\supp}{supp}
\newcommand{\Eqn}[1]{&\hspace{-0.5em}#1\hspace{-0.5em}&}
\newtheorem{Prop}{Proposition}[section]
\newtheorem{Ass}[Prop]{Assumption}
\newtheorem{Thm}[Prop]{Theorem}
\newtheorem{Rem}[Prop]{Remark}
\newtheorem{Lem}[Prop]{Lemma}
\title{Threshold between short and long-range potentials for non-local Schr\"odinger operators}
\author{Atsuhide ISHIDA\\
\\
Department of Liberal Arts, Faculty of Engineering,
 \\Tokyo University of Science\\
\normalsize 6-3-1 Niijuku, Katsushika-ku,Tokyo 125-8585, Japan\\
\normalsize E-mail: aishida@rs.tus.ac.jp\\
\normalsize Fax: +81-3-5876-1616
}
\date{}
\begin{document}
\begin{flushleft}
{\Large \bf Threshold between short and long-range potentials for non-local Schr\"odinger operators}
\end{flushleft}

\begin{flushleft}
{\large Atsuhide ISHIDA}\\
{Department of Liberal Arts, Faculty of Engineering, Tokyo University of Science, 6-3-1 Niijuku, Katsushika-ku,Tokyo 125-8585, Japan\\ 
Email: aishida@rs.tus.ac.jp
}
\end{flushleft}
\begin{flushleft}
{\large Kazuyuki WADA}\\
{Department of General Science and Education, National Institute of Technology, Hachinohe College, 16-1 Uwanotai Tamonoki, Hachinohe 039-1192, Japan}\\
Email: {wada-g@hachinohe.kosen-ac.jp}
\end{flushleft}

%abstract%%%%%%%%%%%%%%%%%%%%%%%%%%
\begin{abstract}
We develop scattering theory for non-local Schr\"odinger operators defined by functions of the Laplacian that include its fractional power $(-\Delta)^\rho$ with $0<\rho\leqslant1$. In particular, our function belongs to a wider class than the set of Bernstein functions. By showing the existence and non-existence of the wave operators, we clarify the threshold between the short and long-range decay conditions for perturbational potentials.
\end{abstract}

\quad\textit{Keywords}: Bernstein function, scattering theory, wave operators\par
\quad\textit{MSC}2010: 47B25, 81Q10, 81U05

%introduction%%%%%%%%%%%%%%%%%%%%%%%%
%%%%%%%%%%%%%%%%%%%%%%%%%%%%%%%
\section{Introduction\label{introduction}}
We consider the set of functions
\begin{equation}
\tilde{\mathscr{B}}=\left\{\Psi\in C^1\left(0,\infty\right)\bigm|\Psi(\sigma)\geqslant0, \frac{d\Psi}{d\sigma}(\sigma)\geqslant0, \sigma\in(0,\infty)\right\}.
\end{equation}
For $\Psi\in\tilde{\mathscr{B}}$, the free Hamiltonian we consider in this paper is given by
\begin{equation}
H_0^{\Psi}=\Psi(-\Delta)=\Psi\left(|D|^2\right),
\end{equation}
a self-adjoint operator acting on $L^2(\mathbb{R}^n)$, where $D=-{\rm i}\left(\partial_{x_1},\ldots,\partial_{x_n}\right)$. Let $V$ be a real-valued function satisfying suitable conditions. We call operator $H_0^{\Psi}+V$ a non-local Schr\"odinger operator.\par
An element of the following central set of functions
\begin{equation}
\mathscr{B}=\left\{\Phi\in C^\infty\left(0,\infty\right)\bigm|\Phi(\sigma)\geqslant0, (-1)^k\frac{d^k\Phi}{d\sigma^k}(\sigma)\leqslant0, \sigma\in(0,\infty), k\in\mathbb{N} \right\}
\end{equation}
is called a Bernstein function. Clearly, the relation
\begin{equation}
\mathscr{B}\subsetneq\tilde{\mathscr{B}}
\end{equation}
holds. Therefore, we treat much more general operators than those defined as Bernstein functions.\par
Consider $\Phi\in\mathscr{B}$ that satisfy $\lim_{\sigma\rightarrow+0}\Phi(\sigma)=0$. One important property of $\Phi$ is that the semigroup generated by $\Phi\left(|D|^2\right)+V$ is expressible via a stochastic process
\begin{equation}
\left(e^{-t\left(\Phi\left(|D|^2\right)+V\right)}f\right)(x)=\mathbb{E}^x\left[e^{-\int_0^tV\left(B_{T(s)}\right)ds}f\left(B_{T(t)}\right)\right]
\end{equation}
where $\{B_{t}\}_{t\geqslant0}$ is the $n$-dimensional Brownian motion starting at $x\in\mathbb{R}^n$ and $T$ is the subordinator associated with $\Phi$. The above expression is often called Feynman-Kac formula or path integral representation.\par
The Feynman--Kac formula enables us to analyze $\Phi\left(|D|^2\right)+V$ in terms of stochastic calculus. In particular, it is useful to investigate some properties of the eigenfunctions of $\Phi\left(|D|^2\right)+V$. In this direction, there already exist several results \cite{HiLo1,HiIcLo,KaLo}. Comprehensive results related to the Feynman--Kac formula are summarized in \cite{LoHiBe}.\par
With regard to quantum scattering theory for non-local Schr\"odinger operators, there are only a few results. For instance, \cite{Gi} considered a function $\Phi\in\hat{\mathscr{B}}$, where
\begin{equation}
\hat{\mathscr{B}}=\left\{\Phi\in C^\infty\left(0,\infty\right)\bigm|\frac{d\Phi}{d\sigma}(\sigma)\geqslant0, \sigma\in(0,\infty), \lim_{\sigma\rightarrow\infty}\Phi(\sigma)=\infty\right\}.
\end{equation}
\cite{Gi} discusses the asymptotic completeness of the wave operators under a short-range perturbation $V$ by investigating the semigroup differences and proves the absence of the singular continuous spectrum of $\Phi\left(|D|^2\right)+V$. Of note, instances of fractional powers are specific examples of functions of the Laplacian. \cite{Ki2, Ki3} first constructed long-range scattering theory for $(-\Delta)^\rho$ with $1/2\leqslant\rho\leqslant1$. Recently, inverse scattering problems involving the short-range potentials were investigated in \cite{Is3} for exponents $\rho$ satisfying $1/2<\rho\leqslant1$.\par
Under these situations, our aim in this paper is to develop scattering theory for non-local Schr\"{o}dinger operators defined by functions belonging to the wider set $\tilde{\mathscr{B}}$ with some additional assumptions. We first prove the existence of the wave operators in which the potential function $V$ has a short-range decay. \cite{Gi} also proved these existence for $\Phi\in\hat{\mathscr{B}}$ by way of semigroups $e^{-t\Phi\left(|D|^2\right)}$ and $e^{-t\left(\Phi\left(|D|^2\right)+V\right)}$. However, our proof of the existence of the wave operators is simple and intuitive, and is obtained directly from the propagation estimate for the time-evolution of $e^{-{\rm i}tH_0^\Psi}$. We next clarify the threshold between the short and long-range decay exponents by providing a concrete example of the potential functions for which the wave operators do not exist.\par
For quantum scattering theory, it is important to distinguish the threshold between the short and long-range conditions. Historically, \cite{Do} considered the non-existence of the standard wave operators for the Coulomb interaction by showing that the weak limits of the pair of propagators were equal to zero (see also \cite{ReSi}). \cite{Is2, Is4} applied this method to the fractional Laplacian and massive relativistic operator. From a different perspective, \cite{Oz} invented an original approach to prove the non-existence of the wave operators for the Stark Hamiltonian. This approach was applied to various quantum systems (repulsive Hamiltonian \cite{Is1}, time-dependent harmonic oscillator \cite{IsKa} and 1D quantum walk \cite{Wa}). Our approach in this paper follows \cite{Oz}.\\\par

First of all, we provide the basic properties associated with the spectrum of $H_0^{\Psi}$. The absolute continuity of $\Phi\left(|D|^2\right)$ for $\Phi\in\hat{\mathscr{B}}$ was mentioned in \cite{Gi} (Remark 2.2). We also state several other properties of $H_0^{\Psi}=\Psi\left(|D|^2\right)$ for $\Psi\in\tilde{\mathscr{B}}$ in the following proposition.

%Prop0
\begin{Prop}\label{Prop0}
\begin{enumerate}
Assume that $\Psi\in\tilde{\mathscr{B}}$. 
\item
The spectrum of $H_0^{\Psi}$ coincides with
\begin{equation}
\left[\lim_{\sigma\rightarrow+0}\Psi(\sigma),\lim_{\sigma\rightarrow\infty}\Psi(\sigma)\right]\label{prop0_1}
\end{equation}
if $\lim_{\sigma\rightarrow\infty}\Psi(\sigma)<\infty$,
\begin{equation}
\left[\lim_{\sigma\rightarrow+0}\Psi(\sigma),\infty\right)\label{prop0_2}
\end{equation}
if $\lim_{\sigma\rightarrow\infty}\Psi(\sigma)=\infty$.

\item
If the set
\begin{equation}
A=\left\{\sigma\in(0,\infty)\bigm|\Psi'(\sigma)=0\right\} 
\end{equation}
is at most discrete, the pure point spectrum of $H_0^{\Psi}$ is empty. If there exists a proper interval $I\subset A$, for $\sigma\in I$ fixed, $\Psi(\sigma)$ is an eigenvalue of $H_0^{\Psi}$ with infinite multiplicity.

\item
If the set $A$ is at most discrete, $H_0^{\Psi}$ is absolutely continuous.
\end{enumerate}
\end{Prop}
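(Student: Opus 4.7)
The overall plan is to realize $H_0^\Psi$ as the multiplication operator $M_{\Psi(|\xi|^2)}$ on $L^2(\mathbb{R}^n_\xi)$ via the Fourier transform and then read off each claim from the measure-theoretic behaviour of the level sets of the symbol $p(\xi):=\Psi(|\xi|^2)$. Since $\Psi\in\tilde{\mathscr{B}}$ is continuous and non-decreasing on $(0,\infty)$, its range on $(0,\infty)$ is an interval and $p$ is a continuous non-negative function on $\mathbb{R}^n$ tending to the appropriate limit near the origin and near infinity.

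For part (1) I would invoke the well-known fact that the spectrum of a multiplication operator equals the essential range of its symbol, and then observe that because $p$ is continuous and $\xi\mapsto|\xi|^2$ is an open map away from $0$, the essential range of $p$ equals the closure of its image. Monotonicity and continuity of $\Psi$ identify this closure with $[\lim_{\sigma\to+0}\Psi(\sigma),\lim_{\sigma\to\infty}\Psi(\sigma)]$ in the bounded case, or $[\lim_{\sigma\to+0}\Psi(\sigma),\infty)$ in the unbounded case.

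For part (2) the key observation is that $\lambda$ is an eigenvalue of $M_{p}$ precisely when the sub-level set $\{\xi\in\mathbb{R}^n:p(\xi)=\lambda\}$ has positive Lebesgue measure, because eigenfunctions are exactly the $L^2$-functions supported in this set. If $A$ is at most discrete then, between consecutive points of $A$, the function $\Psi$ is strictly increasing (it is $C^1$ with non-negative derivative that vanishes only discretely), so $\Psi^{-1}(\{\lambda\})$ consists of at most a single point $\sigma_0$, forcing $\{p=\lambda\}$ to be contained in the sphere $\{|\xi|^2=\sigma_0\}$, which has measure zero. Conversely, if $A$ contains a proper interval $I$, then $\Psi\equiv\lambda$ on $I$, the set $\{\xi:|\xi|^2\in I\}$ is a non-degenerate spherical shell, and the infinite-dimensional subspace $\{f\in L^2:\supp\hat f\subset\{|\xi|^2\in I\}\}$ consists of eigenfunctions for eigenvalue $\lambda$.

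Part (3) should follow from a change-of-variables argument once (2) excludes the point spectrum: for any Borel set $B\subset\mathbb{R}$ of Lebesgue measure zero and any $f\in L^2$, the spectral measure satisfies
\begin{equation}
\langle f,E_{H_0^\Psi}(B)f\rangle=\int_{\{p(\xi)\in B\}}|\hat f(\xi)|^2\,d\xi.
\end{equation}
Since $\Psi'>0$ off the discrete set $A$ and $A$ itself has measure zero, $\Psi^{-1}(B)$ is of measure zero in $(0,\infty)$ by the one-dimensional $C^1$ change of variables, and then the smooth map $\xi\mapsto|\xi|^2$ (submersive away from the single point $\xi=0$) pulls this back to a null set in $\mathbb{R}^n$, giving $\langle f,E_{H_0^\Psi}(B)f\rangle=0$. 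The most delicate step is the one in part (3), where one must carefully handle the null set $A\cup\{0\}$ of critical points; everything else reduces to continuity and monotonicity of $\Psi$ and the structure of radial multipliers.
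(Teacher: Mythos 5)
Your proposal is correct and takes essentially the same route as the paper: diagonalize $H_0^\Psi$ by the Fourier transform, identify the spectrum with the closure of the range of $\Psi(|\xi|^2)$ using continuity and monotonicity, characterize eigenvalues by the Lebesgue measure of level sets of the symbol, and obtain absolute continuity by showing that preimages of one-dimensional null sets under $\Psi(|\xi|^2)$ are null in $\mathbb{R}^n$ when $A$ is discrete. Your treatment of the critical set $A\cup\{0\}$ in part (3) is in fact slightly more explicit than the paper's, which simply asserts the null-set statement "because $A$ is discrete."
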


\begin{proof}
\begin{enumerate}
\item
By the usual Fourier transform on $L^2(\mathbb{R}^n)$, we represent
\begin{equation}
H_0^{\Psi}=\mathscr{F}^*\Psi\left(|\xi|^2\right)\mathscr{F}.
\end{equation}
Therefore, its spectrum is given by the closure of the range of $\Psi\left(|\xi|^2\right)$. Because $\Psi$ is continuous and monotonically increasing, the spectrum of $H_0^{\Psi}$ coincides with \eqref{prop0_1} or \eqref{prop0_2}. 

\item
If $A$ is a discrete set, then because $\Psi$ is continuous and monotonic, for $\lambda\geqslant\lim_{\sigma\rightarrow+0}\Psi(\sigma)$, there is only one $\sigma_\lambda\in(0,\infty)$ such that $\Psi(\sigma_\lambda)=\lambda$ and the $n$-dimensional Lebesgue measure of
\begin{equation}
\left\{\xi\in\mathbb{R}^n\bigm||\xi|^2=\sigma_\lambda\right\}
\end{equation}
is zero. Therefore, if we assume that there exists $u\in L^2\left(\mathbb{R}^n\right)$ such that 
\begin{equation}
\left(\Psi\left(|\xi|^2\right)-\lambda\right)u(\xi)=0,
\end{equation}
then $u=0$ holds. This implies that the pure point spectrum of $H_0^{\Psi}$ is empty. Alternatively, if $I\subset A$ is a proper interval, for $\sigma\in I$, the Lebesgue measure of
\begin{equation}
\left\{\xi\in\mathbb{R}^n\bigm|\Psi\left(|\xi|^2\right)=\Psi(\sigma)\right\}\label{prop0_3}
\end{equation}
is positive. In this case, $0\not=u\in L^2\left(\mathbb{R}^n\right)$, which has support in \eqref{prop0_3}, satisfies
\begin{equation}
\left(\Psi\left(|\xi|^2\right)-\Psi(\sigma)\right)u(\xi)=0.
\end{equation}
This implies that $\Psi(\sigma)$ is an eigenvalue of $H_0^{\Psi}$ and $u$ is the corresponding eigenfunction. In particular, \eqref{prop0_3} has infinite disjoint subsets for which the Lebesgue measures are positive. This also implies that $\Psi(\sigma)$ has infinite multiplicity. 

\item
If the one-dimensional Lebesgue measure of the Borel set $B$ is equal to zero, the $n$-dimensional Lebesgue measure of
\begin{equation}
\left\{\xi\in\mathbb{R}^n\bigm|\Psi\left(|\xi|^2\right)\in B\right\}
\end{equation}
is also zero because $A$ is discrete. Therefore,
\begin{equation}
\int_{\Psi\left(|\xi|^2\right)\in B}\left|\left(\mathscr{F}u\right)(\xi)\right|^2d\xi=0
\end{equation}
holds for $u\in L^2(\mathbb{R}^n)$. This shows the absolute continuity of $H_0^{\Psi}$.
\end{enumerate}
\end{proof}

%Rem1
\begin{Rem}\label{Rem1}
Statements 1 and 2 in Proposition \ref{Prop0} can be replaced by the following. If the Lebesgue measure of $A$ is zero, the pure point spectrum of $H_0^{\Psi}$ is empty and $H_0^{\Psi}$ is absolutely continuous. If the Lebesgue measure of $A$ is positive, $H_0^{\Psi}$ has an eigenvalue with infinite multiplicity. These proofs are demonstrated in a similar manner to that above.
\end{Rem}

%Ass1
\begin{Ass}\label{Ass1}
Let $\Psi_\pm\in\tilde{\mathscr{B}}$ be fixed and suppose $\Psi'_\pm=d\Psi_\pm/d\sigma>0$. In addition, we assume that $\Psi_+'(\sigma^2)\sigma$ increases monotonically and $\Psi_-'(\sigma^2)\sigma$ decreases monotonically, for $\sigma\in(0,\infty)$.
\end{Ass}

%Rem2
\begin{Rem}\label{Rem2}
Under this assumption, $H_0^{\Psi_\pm}$ do not have any eigenvalues and are absolutely continuous because $\Psi'_\pm>0$.
\end{Rem}

The monotonicity in Assumption \ref{Ass1} is not extraordinary and it is not difficult to remove this assumption (see Remark \ref{Rem3} immediately following Proposition \ref{Prop1}). For example, $\Psi_+(\sigma)=\sqrt{\sigma}$ is allowed although $\Psi_+'(\sigma^2)\sigma$ is always equal to $1/2$. In this case, as is well known,
\begin{equation}
\Psi_+\left(|D|^2\right)=\sqrt{-\Delta}
\end{equation}
is the massless relativistic Schr\"odinger operator. More generally, Assumption \ref{Ass1} admits fractional Schr\"odinger operators $\Psi_+\left(|D|^2\right)=(-\Delta)^\rho$ with $1/2\leqslant\rho\leqslant1$ and $\Psi_-\left(|D|^2\right)=(-\Delta)^\rho$ with $0<\rho<1/2$. We assume that $\Psi'_\pm(\sigma^2)\sigma$ is monotonic when keeping in mind fractional exponents.\par
In the monotonically increasing case, high-energy is expressed as $\Psi'_+\left(|\xi|^2\right)|\xi|$ with $|\xi|\gg1$, whereas in monotonically decreasing case, high-energy bocomes $\Psi'_-\left(|\xi|^2\right)|\xi|$ with $|\xi|\ll1$. This reverse trend in the momentum space is interesting and is described by the difference of two inequalities \eqref{prop1_1} and \eqref{prop1_2} in Proposition \ref{Prop1}.

%Ass2
\begin{Ass}\label{Ass2}
Let $V^{\rm S}\in L^\infty(\mathbb{R}^n)$. There exist positive constant $C$ and exponents $\gamma_{\rm S}>1$ such that
\begin{equation}
|V^{\rm S}(x)|\leqslant C\langle x\rangle^{-\gamma_{\rm S}},\label{short_range}
\end{equation}
where $\langle x\rangle=\sqrt{1+|x|^2}$. For $0\not=\kappa\in\mathbb{R}$ and $0<\gamma_L\leqslant1$, we also define $V^{\rm L}\in L^\infty(\mathbb{R}^n)$ by
\begin{equation}
V^{\rm L}(x)=\kappa\langle x\rangle^{-\gamma_L}.\label{long_range}
\end{equation}
\end{Ass}
Throughout this paper, $\phi\in L^2(\mathbb{R}^n)$ satisfies $\mathscr{F}\phi\in C_0^\infty(\mathbb{R}^n\setminus\{0\})$. In particular, for $R>\epsilon>0$ fixed, we assume that 
\begin{equation}
\supp\mathscr{F}\phi\subset\{ \xi\in\mathbb{R}^n\bigm|\epsilon\leqslant|\xi|\leqslant R\}.
\end{equation}
$F(\cdots)$ denotes the characteristic function of the set $\{\cdots\}$. Moreover, we write the full Hamiltonians in the form
\begin{equation}
H_{\rm S}^{\Psi_\pm}=H_0^{\Psi_\pm}+V^{\rm S},\quad H_{\rm L}^{\Psi_\pm}=H_0^{\Psi_\pm}+V^{\rm L}.
\end{equation}

\begin{Thm}\label{Thm1}
The wave operators
\begin{equation}
\slim_{t\rightarrow\infty}e^{{\rm i}tH_{\rm S}^{\Psi_\pm}}e^{-{\rm i}tH_0^{\Psi_\pm}},\quad\slim_{t\rightarrow-\infty}e^{{\rm i}tH_{\rm S}^{\Psi_\pm}}e^{-{\rm i}tH_0^{\Psi_\pm}}\label{existence}
\end{equation}
exist. However, 
\begin{equation}
\slim_{t\rightarrow\infty}e^{{\rm i}tH_{\rm L}^{\Psi_\pm}}e^{-{\rm i}tH_0^{\Psi_\pm}},\quad\slim_{t\rightarrow-\infty}e^{{\rm i}tH_{\rm L}^{\Psi_\pm}}e^{-{\rm i}tH_0^{\Psi_\pm}}\label{nonexistence}
\end{equation}
do not exist. This means that the threshold between short and long-range depends on whether the decay exponent of the potential function is less than $-1$, or greater than or equal to $-1$
\end{Thm}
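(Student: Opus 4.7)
\emph{Existence of \eqref{existence}.} I would apply Cook's method on the dense set of vectors $\phi$ with $\mathscr{F}\phi\in C_0^\infty(\mathbb{R}^n\setminus\{0\})$, reducing the problem to
$\int_0^{\pm\infty}\|V^{\rm S}e^{-{\rm i}tH_0^{\Psi_\pm}}\phi\|\,dt<\infty$.
The key analytical input is the non-stationary-phase propagation estimate
\begin{equation*}
\|F(|x|\leqslant c_\pm|t|/2)\,e^{-{\rm i}tH_0^{\Psi_\pm}}\phi\|\leqslant C_N\langle t\rangle^{-N}\quad(N\in\mathbb{N}),
\end{equation*}
where $c_+:=2\Psi_+'(\epsilon^2)\epsilon$ and $c_-:=2\Psi_-'(R^2)R$ are strictly positive by the monotonicity in Assumption \ref{Ass1}. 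The estimate is obtained by iterated integration by parts in $\xi$ in the oscillatory integral representation of $e^{-{\rm i}tH_0^{\Psi_\pm}}\phi$, whose stationary point $x=2t\Psi_\pm'(|\xi|^2)\xi$ sits outside the cutoff region. Combined with \eqref{short_range}, this yields $\|V^{\rm S}e^{-{\rm i}tH_0^{\Psi_\pm}}\phi\|=O(\langle t\rangle^{-\gamma_{\rm S}})$, integrable precisely because $\gamma_{\rm S}>1$.

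\emph{Non-existence of \eqref{nonexistence}.} I would follow the Ozawa-style approach of \cite{Oz}, adapted to fractional powers in \cite{Is2}. Introduce the classical action phase
\begin{equation*}
\theta_\pm(t,\xi):=\int_0^t V^{\rm L}\bigl(2s\Psi_\pm'(|\xi|^2)\xi\bigr)\,ds
\end{equation*}
and establish two steps:
\begin{enumerate}
\item[(A)] existence of the \emph{modified} strong limit $\slim_{t\to\pm\infty}e^{{\rm i}tH_{\rm L}^{\Psi_\pm}}e^{-{\rm i}tH_0^{\Psi_\pm}}e^{-{\rm i}\theta_\pm(t,D)}\phi$;
\item[(B)] non-existence of the strong $L^2$-limit of the Fourier multiplier $e^{-{\rm i}\theta_\pm(t,D)}\phi$ as $t\to\pm\infty$.
\end{enumerate}
A unitary-comparison argument using (A) together with the unitarity of $e^{{\rm i}tH_{\rm L}^{\Psi_\pm}}e^{-{\rm i}tH_0^{\Psi_\pm}}$ then converts any hypothetical strong limit in \eqref{nonexistence} into convergence properties of $e^{-{\rm i}\theta_\pm(t,D)}\phi$ that are incompatible with (B). For (B), on $\supp\mathscr{F}\phi$ one has $\theta_\pm(t,\xi)\sim C_\pm(\xi)\,t^{1-\gamma_{\rm L}}/(1-\gamma_{\rm L})$ when $\gamma_{\rm L}<1$ and $\theta_\pm(t,\xi)\sim C_\pm(\xi)\log t$ when $\gamma_{\rm L}=1$, with $C_\pm(\xi):=\kappa\bigl(2\Psi_\pm'(|\xi|^2)|\xi|\bigr)^{-\gamma_{\rm L}}$. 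When $C_\pm$ is constant on $\supp\mathscr{F}\phi$ (for instance in the massless relativistic case $\Psi_+(\sigma)=\sqrt{\sigma}$), the multiplier collapses to a divergent scalar phase and has no limit; when $C_\pm$ varies, a Riemann--Lebesgue argument driven by the blow-up of $\partial_\xi\theta_\pm(t,\cdot)$ yields $e^{-{\rm i}\theta_\pm(t,D)}\phi\rightharpoonup 0$, which is incompatible with the constancy of its $L^2$-norm $\|\phi\|\ne 0$.

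The principal obstacle is step (A). Differentiating and using that the Fourier multiplier $\dot\theta_\pm(t,D)=V^{\rm L}(2t\Psi_\pm'(|D|^2)D)$ commutes with $e^{-{\rm i}tH_0^{\Psi_\pm}}$, Cook's criterion for (A) reduces to the semiclassical discrepancy bound
\begin{equation*}
\bigl\|[V^{\rm L}(x)-V^{\rm L}(2t\Psi_\pm'(|D|^2)D)]e^{-{\rm i}tH_0^{\Psi_\pm}}\phi\bigr\|=O(\langle t\rangle^{-1-\delta})
\end{equation*}
for some $\delta>0$. This comparison between a multiplication operator in $x$ and a Fourier multiplier in $\xi$, evaluated on the propagated state, requires a stationary-phase analysis finely adapted to the non-smooth functional calculus of $\Psi_\pm(|D|^2)$, critically exploiting the uniform lower bound $2\Psi_\pm'(|\xi|^2)|\xi|\geqslant c_\pm>0$ on $\supp\mathscr{F}\phi$ granted by Assumption \ref{Ass1}. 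The required rate $\langle t\rangle^{-1-\delta}$ is precisely the critical integrability threshold dual to the long-range borderline $\gamma_{\rm L}\leqslant 1$, which is what makes the analytic step delicate.
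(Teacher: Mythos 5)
Your existence half is essentially the paper's own argument: Cook's method on states with compactly supported Fourier transform combined with the non-stationary-phase propagation estimate of Proposition \ref{Prop1}, exploiting that the stationary point $x=2t\Psi_\pm'(|\xi|^2)\xi$ lies outside the cutoff region by the monotonicity in Assumption \ref{Ass1}; nothing to add there.

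Your non-existence half takes a genuinely different route (a Dollard-type modified-wave-operator comparison rather than the Ozawa-style direct contradiction the paper uses), and it has a genuine gap at step (A). In Cook's method for the modified operator, the discrepancy $V^{\rm L}(x)-V^{\rm L}(2t\Psi_\pm'(|D|^2)D)$ acts on the state $e^{-{\rm i}tH_0^{\Psi_\pm}}e^{-{\rm i}\theta_\pm(t,D)}\phi$, not on $e^{-{\rm i}tH_0^{\Psi_\pm}}\phi$ as in your displayed bound. The modifier matters:
\begin{equation*}
\left\|\left(x-2t\Psi_\pm'(|D|^2)D\right)e^{-{\rm i}tH_0^{\Psi_\pm}}e^{-{\rm i}\theta_\pm(t,D)}\phi\right\|
=\left\|\left(x+(\nabla_\xi\theta_\pm)(t,D)\right)\phi\right\|,
\end{equation*}
and on $\supp\mathscr{F}\phi$ one has $|\nabla_\xi\theta_\pm(t,\xi)|\asymp\int_1^t s\cdot s^{-1-\gamma_{\rm L}}\,ds\asymp t^{1-\gamma_{\rm L}}$. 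Hence the leading Taylor term of the discrepancy on the relevant state is of size $t^{-1-\gamma_{\rm L}}\cdot t^{1-\gamma_{\rm L}}=t^{-2\gamma_{\rm L}}$, integrable only for $\gamma_{\rm L}>1/2$. So the first-order Dollard phase cannot deliver (A) on the full range $0<\gamma_{\rm L}\leqslant1$ claimed by the theorem; for $\gamma_{\rm L}\leqslant1/2$ one needs iterated Hamilton--Jacobi modifiers (this is precisely why the constructions in \cite{Ki2,Ki3} stop at $\rho\geqslant1/2$), and you have not supplied that analysis. A secondary issue: even granting (A) and (B), the comparison step needs more than stated. Writing $e^{{\rm i}\theta_\pm(t,D)}\phi=B_t^*A_t\phi$ with $A_t=e^{{\rm i}tH_{\rm L}^{\Psi_\pm}}e^{-{\rm i}tH_0^{\Psi_\pm}}$ and $B_t$ the modified family, $B_t^*$ converges only weakly off the closure of the range of the modified wave operator, so one needs either its completeness or the scalar-phase trick to reach a contradiction.

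The paper's route (Lemmas \ref{Lem1} and \ref{Lem2}, following \cite{Oz}) sidesteps all of this: assume the limit $\phi_\pm$ exists, integrate $\frac{d}{dt}\left(e^{{\rm i}tH_{\rm L}^{\Psi_\pm}}e^{-{\rm i}tH_0^{\Psi_\pm}}\phi,\phi_\pm\right)$ over $[t_1,t_2]$, bound the main term from below by $c\,t^{-\gamma_{\rm L}}\|\phi\|^2$ using only the Heisenberg relation \eqref{Lem1_2} and a Chebyshev estimate, and let the divergence of $\int^{t_2}t^{-\gamma_{\rm L}}\,dt$ contradict the uniform bound $2\|\phi\|\,\|\phi_\pm\|$. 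That argument requires no modifier at all and covers every $0<\gamma_{\rm L}\leqslant1$; I recommend adopting it, or else restricting your step (A) to $\gamma_{\rm L}>1/2$ and proving the discrepancy bound on the correct state.
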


To prove Theorem \ref{Thm1}, several Propositions and Lemmas are needed. In the following, we only consider the limit $t\rightarrow\infty$ because the other case is proved similarly.

%existence%%%%%%%%%%%%%%%%%%%%%%%
%%%%%%%%%%%%%%%%%%%%%%%%%%%%%%%
\section{Existence of Wave Operators}

In this section, we prove the existence of the wave operators. Although the following propagation estimates for free evolution $e^{-{\rm i}tH_0^{\Psi_\pm}}$ are simple, these estimates also work well in the next section. 
%Prop1
\begin{Prop}\label{Prop1}
Let $t>0$ and $N\in\mathbb{N}$. There exist positive constants $C_{\pm,N,\epsilon, R}$ such that
\begin{equation}
\left\| F\left(\frac{|x|}{t}\leqslant\Psi'_+(\epsilon^2)\epsilon\right)e^{-{\rm i}tH_0^{\Psi_+}}\phi\right\|_{L^2(\mathbb{R}^n)}\leqslant C_{+,N,\epsilon, R}t^{-N}\left\|\langle x\rangle^N\phi\right\|_{L^2(\mathbb{R}^n)}\label{prop1_1}
\end{equation}
and
\begin{equation}
\left\| F\left(\frac{|x|}{t}\leqslant\Psi'_-(R^2)R\right)e^{-{\rm i}tH_0^{\Psi_-}}\phi\right\|_{L^2(\mathbb{R}^n)}\leqslant C_{-,N,\epsilon, R}t^{-N}\left\|\langle x\rangle^N\phi\right\|_{L^2(\mathbb{R}^n)}\label{prop1_2}
\end{equation}
hold. 
\end{Prop}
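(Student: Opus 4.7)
The plan is to use the Fourier representation of the free propagator and apply the non-stationary phase method in the momentum variable; the hypothesis on $|x|/t$ together with Assumption \ref{Ass1} produces a gradient lower bound of order $t$ on the phase, which supplies the rapid decay after iterated integration by parts.

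Explicitly, write
$$
(e^{-itH_0^{\Psi_\pm}}\phi)(x)=(2\pi)^{-n/2}\int_{\mathbb{R}^n}e^{i\psi_t(x,\xi)}(\mathscr{F}\phi)(\xi)\,d\xi,\qquad \psi_t(x,\xi)=x\cdot\xi-t\Psi_\pm(|\xi|^2),
$$
so that $\nabla_\xi\psi_t=x-2t\Psi'_\pm(|\xi|^2)\xi$. The first and most substantive step is the uniform lower bound $|\nabla_\xi\psi_t|\geq ct$ for $\xi\in\supp\mathscr{F}\phi$ under the respective hypothesis on $|x|/t$. In the $\Psi_+$-case, the monotonic increase of $\sigma\mapsto\Psi'_+(\sigma^2)\sigma$ gives $\Psi'_+(|\xi|^2)|\xi|\geq\Psi'_+(\epsilon^2)\epsilon$ whenever $|\xi|\geq\epsilon$, and combined with $|x|/t\leq\Psi'_+(\epsilon^2)\epsilon$ the reverse triangle inequality delivers
$$
|\nabla_\xi\psi_t|\geq 2t\Psi'_+(|\xi|^2)|\xi|-|x|\geq 2t\Psi'_+(\epsilon^2)\epsilon-t\Psi'_+(\epsilon^2)\epsilon=t\Psi'_+(\epsilon^2)\epsilon.
$$
In the $\Psi_-$-case the opposite monotonicity together with $|\xi|\leq R$ yields $\Psi'_-(|\xi|^2)|\xi|\geq\Psi'_-(R^2)R$ and the identical subtraction gives $|\nabla_\xi\psi_t|\geq t\Psi'_-(R^2)R$. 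This is the whole point of the two opposite monotonicity conventions in Assumption \ref{Ass1}: the velocity $\Psi'_\pm(|\xi|^2)|\xi|$ can be bounded below on $\supp\mathscr{F}\phi$ in either the large- or small-$|\xi|$ direction, which is what makes \eqref{prop1_1} and \eqref{prop1_2} formally parallel.

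With the gap in hand I would introduce the standard transport operator $L=-i|\nabla_\xi\psi_t|^{-2}(\nabla_\xi\psi_t)\cdot\nabla_\xi$, satisfying $Le^{i\psi_t}=e^{i\psi_t}$, and integrate by parts $N$ times with its formal transpose $L^{\mathrm{t}}$. Boundary terms vanish because $\mathscr{F}\phi\in C_0^\infty(\mathbb{R}^n\setminus\{0\})$; each iteration produces a factor $|\nabla_\xi\psi_t|^{-2}$ of size $(ct)^{-2}$ multiplied by $\nabla_\xi\psi_t$ or one of its $\xi$-derivatives, all of size $O(t)$, so the net gain per step is $t^{-1}$. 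A routine induction then yields the pointwise bound
$$
\bigl|(e^{-itH_0^{\Psi_\pm}}\phi)(x)\bigr|\leq C_{N,\epsilon,R}\,t^{-N}\sum_{|\alpha|\leq N}\|\partial_\xi^\alpha\mathscr{F}\phi\|_{L^1}
$$
uniformly on the localisation region $|x|/t\leq c$. Taking the $L^2$-norm over the ball $\{|x|\leq ct\}$, dominating the $L^1$-norms of $\partial_\xi^\alpha\mathscr{F}\phi$ by $L^2$-norms via Cauchy-Schwarz on the fixed compact support, and converting through Parseval $\|\partial_\xi^\alpha\mathscr{F}\phi\|_{L^2}=\|x^\alpha\phi\|_{L^2}\leq\|\langle x\rangle^{|\alpha|}\phi\|_{L^2}$ then closes the estimate; the $t^{n/2}$ factor from the volume of the ball is absorbed by iterating the integration by parts a fixed number of additional times, which is admissible since $N$ is arbitrary. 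The main obstacle is therefore located entirely in the first step---extracting the uniform phase-gradient gap from the mere monotonicity clauses of Assumption \ref{Ass1}---after which the remainder is mechanical.
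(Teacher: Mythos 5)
Your proposal is correct and follows essentially the same route as the paper: the same Fourier representation, the same monotonicity argument yielding the phase-gradient lower bound $|\nabla_\xi(x\cdot\xi-t\Psi_\pm(|\xi|^2))|\geqslant t\Psi'_+(\epsilon^2)\epsilon$ (resp.\ $t\Psi'_-(R^2)R$), and the same non-stationary-phase integration by parts. The only difference is that you carry out the iterated integration by parts and the passage to the $L^2$ bound explicitly, whereas the paper delegates that step to a standard reference.
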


\begin{proof}
There exists a function $f\in C_0^\infty(\mathbb{R}^n\setminus\{0\})$ with $\supp f\subset\{\epsilon\leqslant|\xi|\leqslant R\}$ such that $\phi=f(D)\phi$. We then find
\begin{eqnarray}
\lefteqn{F\left(\frac{|x|}{t}\leqslant\Psi'_+(\epsilon^2)\epsilon\right)e^{-{\rm i}tH_{\rm S}^{\Psi_+}}\phi}\nonumber\\
\Eqn{}=\frac{1}{(2\pi)^{n/2}}\int_{\mathbb{R}^n_\xi}e^{{\rm i}\left(x\cdot\xi-t\Psi_+\left(|\xi|^2\right)\right)}F\left(\frac{|x|}{t}\leqslant\Psi'_+(\epsilon^2)\epsilon\right)f(\xi)\left(\mathscr{F}\phi\right)(\xi)d\xi.
\end{eqnarray}
When $|\xi|\geqslant\epsilon$ and $|x|/t\leqslant\Psi'_+(\epsilon^2)\epsilon$ hold, we see that
\begin{equation}
\left|\nabla_\xi\left(x\cdot\xi-t\Psi_+\left(|\xi|^2\right)\right)\right|\geqslant2t\Psi'_+\left(|\xi|^2\right)|\xi|-|x|\geqslant t\Psi'_+(\epsilon^2)\epsilon
\end{equation}
because $\Psi'_+(\sigma^2)\sigma$ is monotonically increasing for $\sigma>0$. Using this inequality and the relation
\begin{equation}
-i\frac{\nabla_\xi\left(x\cdot\xi-t\Psi_+\left(|\xi|^2\right)\right)\cdot\nabla_\xi}{\left|\nabla_\xi\left(x\cdot\xi-t\Psi_+\left(|\xi|^2\right)\right)\right|^2}e^{{\rm i}\left(x\cdot\xi-t\Psi_+\left(|\xi|^2\right)\right)}=e^{{\rm i}\left(x\cdot\xi-t\Psi_+\left(|\xi|^2\right)\right)},
\end{equation}
\eqref{prop1_1} follows from the standard integration by parts method (see Kitada \cite{Ki1} for example). As for \eqref{prop1_2}, when $|\xi|\leqslant R$ and $|x|/t\leqslant\Psi'_-(R^2)R$ hold, we see that
\begin{equation}
|\nabla_\xi(x\cdot\xi-t\Psi_-\left(|\xi|^2\right))|\geqslant t\Psi'_-(R^2)R.
\end{equation}
We therefore also obtain \eqref{prop1_2}.
\end{proof}

%remark
\begin{Rem}\label{Rem3}
If we do not assume the monotonicity of $\Psi'_\pm(\sigma^2)\sigma$, the estimates \eqref{prop1_1} and \eqref{prop1_2} are replaced by
\begin{equation}
\left\| F\left(\frac{|x|}{t}\leqslant\inf_{\epsilon\leqslant\sigma\leqslant R}\Psi'(\sigma^2)\sigma\right)e^{-{\rm i}tH_{\rm S}^\Psi}\phi\right\|_{L^2(\mathbb{R}^n)}\leqslant C_{N,\epsilon, R}t^{-N}\left\|\langle x\rangle^N\phi\right\|_{L^2(\mathbb{R}^n)},
\end{equation}
for $\Psi\in\tilde{\mathscr{B}}$ which satisfies $\Psi'>0$. The following proofs also proceed without monotonicity. 
\end{Rem}

Proposition \ref{Prop1} yields the existence of the wave operators immediately.

%proof of existence
\begin{proof}[Proof of the existence of the wave operators]
Let us first consider the existence of
\begin{equation}
\slim_{t\rightarrow\infty}e^{{\rm i}tH_{\rm S}^{\Psi_+}}e^{-{\rm i}tH_0^{\Psi_+}}.\label{wave_operators_+}
\end{equation}
The derivative at $t$ of $e^{{\rm i}tH_{\rm S}^{\Psi_+}}e^{-{\rm i}tH_0^{\Psi_+}}\phi$ is
\begin{eqnarray}
\lefteqn{\frac{d}{dt}e^{{\rm i}tH_{\rm S}^{\Psi_+}}e^{-{\rm i}tH_0^{\Psi_+}}\phi={\rm i}e^{{\rm i}tH_{\rm S}^{\Psi_+}}V^{\rm S}e^{-{\rm i}tH_0^{\Psi_+}}\phi}\nonumber\\
\Eqn{}={\rm i}e^{{\rm i}tH_{\rm S}^{\Psi_+}}F\left(\frac{|x|}{t}>\Psi'_+(\epsilon^2)\epsilon\right)V^{\rm S}e^{-{\rm i}tH_0^{\Psi_+}}\phi\nonumber\\
\Eqn{}\quad+{\rm i}e^{{\rm i}tH_{\rm S}^{\Psi_+}}F\left(\frac{|x|}{t}\leqslant\Psi'_+(\epsilon^2)\epsilon\right)V^{\rm S}e^{-{\rm i}tH_0^{\Psi_+}}\phi.\label{thm1_1}
\end{eqnarray}
We abbreviate the $L^2$-norm $\|\cdot\|_{L^2(\mathbb{R}^n)}$ to $\|\cdot\|$ for simplicity and estimate
\begin{equation}
\left\|\frac{d}{dt}e^{{\rm i}tH_{\rm S}^{\Psi_+}}e^{-{\rm i}tH_0^{\Psi_+}}\phi\right\|\leqslant C\langle t\Psi'_+(\epsilon^2)\epsilon\rangle^{-\gamma_{\rm S}}\|\phi\|+\left\|V^{\rm S}\right\|_{L^\infty(\mathbb{R}^n)}C_{+,N,\epsilon, R}t^{-N}\left\|\langle x\rangle^N\phi\right\|,\label{thm1_2}
\end{equation}
where we have used the decay assumption \eqref{short_range} and Proposition \ref{Prop1}. With $\gamma_{\rm S}>1$, we can choose $N\in\mathbb{N}$ as $N\geqslant2$. Then \eqref{thm1_2} implies the existence of \eqref{wave_operators_+} by the Cook--Kuroda method and a density argument. The existence of
\begin{equation}
\slim_{t\rightarrow\infty}e^{{\rm i}tH_{\rm S}^{\Psi_-}}e^{-{\rm i}tH_0^{\Psi_-}}.\label{wave_operators_-}
\end{equation}
is proved by simply replacing $\Psi'_+(\epsilon^2)\epsilon$ with $\Psi'_+(R^2)R$ inside the characteristic functions of \eqref{thm1_1}.
\end{proof}

%nonexistence%%%%%%%%%%%%%%%%%%%%%%%
%%%%%%%%%%%%%%%%%%%%%%%%%%%%%%%
\section{Non-existence of Wave Operators}

This section is devoted to proving the non-existence of the wave operators when the potential function $V^{\rm L}$ satisfies \eqref{long_range}.

%Lem1
\begin{Lem}\label{Lem1}
For $t\geqslant1$, there exist positive constants $c_{\pm,1}$ and $c_{\pm,2}$ such that
\begin{equation}
\frac{1}{\kappa}\left(V^{\rm L}e^{-{\rm i}tH_0^{\Psi_\pm}}\phi,e^{-{\rm i}tH_0^{\Psi_\pm}}\phi\right)_{L^2(\mathbb{R}^n)}\geqslant c_{\pm,1}t^{-\gamma_{\rm L}}\|\phi\|_{L^2(\mathbb{R}^n)}^2-c_{\pm,2}t^{-2-\gamma_{\rm L}}\left\|x\phi\right\|_{L^2(\mathbb{R}^n)}^2,\label{Lem1_1}
\end{equation}
where $(\cdot,\cdot)_{L^2(\mathbb{R}^n)}$ denotes the scalar product on $L^2(\mathbb{R}^n)$.
\end{Lem}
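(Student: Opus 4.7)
The plan is to rewrite the bilinear form as a weighted $L^2$-norm, localize the evolved state $u_t:=e^{-{\rm i}tH_0^{\Psi_\pm}}\phi$ to a ballistic ball $\{|x|\leqslant At\}$ for a large constant $A$, and control the tail via Chebyshev together with the Heisenberg evolution of the position observable. Since $V^{\rm L}/\kappa=\langle x\rangle^{-\gamma_{\rm L}}\geqslant 0$, the left-hand side of \eqref{Lem1_1} equals $\|\langle x\rangle^{-\gamma_{\rm L}/2}u_t\|^2$. The elementary pointwise bound $\langle x\rangle^{-\gamma_{\rm L}}\geqslant\langle At\rangle^{-\gamma_{\rm L}}F(|x|\leqslant At)\geqslant(1+A)^{-\gamma_{\rm L}}t^{-\gamma_{\rm L}}F(|x|\leqslant At)$, valid for all $t\geqslant 1$, reduces matters to a lower bound on $\|F(|x|\leqslant At)u_t\|^2$.

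Using unitarity, $\|F(|x|\leqslant At)u_t\|^2=\|\phi\|^2-\|F(|x|>At)u_t\|^2$, and Chebyshev gives $\|F(|x|>At)u_t\|^2\leqslant(At)^{-2}\|xu_t\|^2$. The central input is then the Heisenberg identity $i[H_0^{\Psi_\pm},x]=2\Psi'_\pm(|D|^2)D$, which, since $\Psi'_\pm(|D|^2)D$ commutes with $H_0^{\Psi_\pm}$, integrates to $e^{{\rm i}tH_0^{\Psi_\pm}}xe^{-{\rm i}tH_0^{\Psi_\pm}}=x+2t\Psi'_\pm(|D|^2)D$ on vectors with $\mathscr{F}\phi\in C_0^\infty(\mathbb{R}^n\setminus\{0\})$. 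On the annulus $\epsilon\leqslant|\xi|\leqslant R$ the continuous multiplier $\Psi'_\pm(|\xi|^2)|\xi|$ is bounded by some finite $M_\pm$ (equal to $\Psi'_+(R^2)R$ and $\Psi'_-(\epsilon^2)\epsilon$ under Assumption \ref{Ass1}), so $xu_t=e^{-{\rm i}tH_0^{\Psi_\pm}}(x+2t\Psi'_\pm(|D|^2)D)\phi$ and the triangle inequality yields $\|xu_t\|^2\leqslant 2\|x\phi\|^2+8t^2M_\pm^2\|\phi\|^2$.

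Assembling the pieces gives $\|F(|x|\leqslant At)u_t\|^2\geqslant\|\phi\|^2-2(At)^{-2}\|x\phi\|^2-8A^{-2}M_\pm^2\|\phi\|^2$. Choosing $A\geqslant 4M_\pm$ forces $8M_\pm^2/A^2\leqslant 1/2$, so the last $\|\phi\|^2$ term is absorbed into $\tfrac12\|\phi\|^2$. Multiplying by the prefactor $(1+A)^{-\gamma_{\rm L}}t^{-\gamma_{\rm L}}$ then produces \eqref{Lem1_1} with, for instance, $c_{\pm,1}=\tfrac12(1+A)^{-\gamma_{\rm L}}$ and $c_{\pm,2}=2(1+A)^{-\gamma_{\rm L}}A^{-2}$.

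The one analytically delicate point is justifying the Heisenberg identity for the generally non-polynomial symbol $\Psi_\pm(|\xi|^2)$, since $\Psi_\pm\in C^1$ only. This is harmless on our class of test vectors because $\mathscr{F}\phi$ is compactly supported away from $0$: the evolved state $e^{-{\rm i}tH_0^{\Psi_\pm}}\phi$ is still Schwartz, every multiplier encountered is bounded on the support, and the commutator identity can be verified directly in Fourier space. Beyond this, every step is a routine inequality, and the main bookkeeping I expect to dwell on is the choice of $A$ that absorbs the $t^2\|\phi\|^2$-error into the principal term.
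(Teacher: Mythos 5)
Your proof is correct and follows essentially the same route as the paper's: the Heisenberg identity $e^{{\rm i}tH_0^{\Psi_\pm}}xe^{-{\rm i}tH_0^{\Psi_\pm}}=x+2t\Psi'_\pm\left(|D|^2\right)D$ to bound $\|xu_t\|^2$, a Chebyshev estimate of the mass outside a ballistic ball of radius $\Gamma_\pm t$ with $\Gamma_\pm$ chosen large enough to absorb the $t^2\|\phi\|^2$ error into $\tfrac12\|\phi\|^2$, and the pointwise lower bound $\langle x\rangle^{-\gamma_{\rm L}}\geqslant\langle\Gamma_\pm t\rangle^{-\gamma_{\rm L}}$ on that ball. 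The only cosmetic differences are your slightly sharper constants (the paper carries an extra factor of $n$ in the bound on $\|xu_t\|^2$ and hence takes $\Gamma_\pm\geqslant 4\sqrt{n}M_\pm$) and your explicit remark on justifying the commutator identity for $C^1$ symbols, which the paper leaves implicit.
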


\begin{proof}
By a straightforward computation, we see that the Heisenberg representation of the position $x$ is
\begin{equation}
e^{{\rm i}tH_0^{\Psi_\pm}}xe^{-{\rm i}tH_0^{\Psi_\pm}}=x+2t\Psi'_\pm\left(|D|^2\right)D.\label{Lem1_2}
\end{equation}
Therefore, its time evolution for the monotonically increasing case is estimated to be
\begin{eqnarray}
\left\|xe^{-{\rm i}tH_0^{\Psi_+}}\phi\right\|^2\Eqn{=}\left\|\left(x+2t\Psi'_+\left(|D|^2\right)D\right)f(D)\phi\right\|^2\nonumber\\
\Eqn{\leqslant}2\left\|x\phi\right\|^2+8nt^2\Psi'_+(R^2)^2R^2\|\phi\|^2.
\end{eqnarray}
Take $\Gamma_+\in\mathbb{R}$ such that
\begin{equation}
\Gamma_+\geqslant\max\left\{4\sqrt{n}\Psi'_+(R^2)R, 1\right\},
\end{equation}
then the estimate of $e^{-{\rm i}tH_0^{\Psi_+}}\phi$ outside a sphere of radius $\Gamma_+t$ is
\begin{eqnarray}
\lefteqn{\int_{|x|>\Gamma_+t}\left|\left(e^{-{\rm i}tH_0^{\Psi_+}}\phi\right)(x)\right|^2dx\leqslant\int_{|x|>\Gamma_+t}\frac{|x|^2}{\Gamma_+^2t^2}\left|\left(e^{-{\rm i}tH_0^{\Psi_+}}\phi\right)(x)\right|^2dx}\nonumber\\
\Eqn{}\leqslant\frac{1}{\Gamma_+^2t^2}\left\|xe^{-{\rm i}tH_0^{\Psi_+}}\phi\right\|^2\leqslant\frac{1}{\Gamma_+^2t^2}\left\{2\left\|x\phi\right\|^2+8nt^2\Psi'_+(R^2)^2R^2\|\phi\|^2\right\}\nonumber\\
\Eqn{}\leqslant\frac{2}{\Gamma_+^2t^2}\left\|x\phi\right\|^2+\frac{1}{2}\|\phi\|^2.\label{Lem1_3}
\end{eqnarray}
We write $(\cdot,\cdot)_{L^2(\mathbb{R}^n)}=(\cdot,\cdot)$ and compute
\begin{eqnarray}
\lefteqn{\frac{1}{\kappa}\left(V^{\rm L}e^{-{\rm i}tH_0^{\Psi_+}}\phi,e^{-{\rm i}tH_0^{\Psi_+}}\phi\right)=\int_{|x|>\Gamma_+t}+\int_{|x|\leqslant\Gamma_+t}\langle x\rangle^{-\gamma_{\rm L}}\left|\left(e^{-{\rm i}tH_0^{\Psi_+}}\phi\right)(x)\right|^2dx}\nonumber\\
\Eqn{}\geqslant\int_{|x|\leqslant\Gamma_+t}\langle x\rangle^{-\gamma_{\rm L}}\left|\left(e^{-{\rm i}tH_0^{\Psi_+}}\phi\right)(x)\right|^2dx\nonumber\\
\Eqn{}\geqslant\langle \Gamma_+t\rangle^{-\gamma_{\rm L}}\int_{|x|\leqslant\Gamma_+t}\left|\left(e^{-{\rm i}tH_0^{\Psi_+}}\phi\right)(x)\right|^2dx\nonumber\\
\Eqn{}=\langle \Gamma_+t\rangle^{-\gamma_{\rm L}}\|\phi\|^2-\langle \Gamma_+t\rangle^{-\gamma_{\rm L}}\int_{|x|>\Gamma_+t}\left|\left(e^{-{\rm i}tH_0^{\Psi_+}}\phi\right)(x)\right|^2dx.\hspace{25mm}
\end{eqnarray}
Using the inequality \eqref{Lem1_3}, we have
\begin{gather}
\frac{1}{\kappa}\left(V^{\rm L}e^{-{\rm i}tH_0^{\Psi_+}}\phi,e^{-{\rm i}tH_0^{\Psi_+}}\phi\right)\geqslant\frac{1}{2}\langle \Gamma_+t\rangle^{-\gamma_{\rm L}}\|\phi\|^2-\langle \Gamma_+t\rangle^{-\gamma_{\rm L}}\times\frac{2}{\Gamma_+^2t^2}\left\|x\phi\right\|^2\nonumber\\
\geqslant c_{+,1}t^{-\gamma_{\rm L}}\|\phi\|^2-c_{+,2}t^{-2-\gamma_{\rm L}}\left\|x\phi\right\|^2.\label{Lem1_4}
\end{gather}
For the last inequality in \eqref{Lem1_4}, we set $c_{+,1}$ and $c_{+,2}$ using
\begin{eqnarray}
\frac{1}{2}\langle \Gamma_+t\rangle^{-\gamma_{\rm L}}\Eqn{\geqslant}\frac{1}{2}\left(1+\Gamma_+t\right)^{-\gamma_{\rm L}}\geqslant\frac{1}{2}\left(2\Gamma_+\right)^{-\gamma_{\rm L}}t^{-\gamma_{\rm L}}=c_{+,1}t^{-\gamma_{\rm L}},\\
2\frac{\langle \Gamma_+t\rangle^{-\gamma_{\rm L}}}{\Gamma_+^2t^2}\Eqn{\leqslant}2\frac{\left(\Gamma_+t\right)^{-\gamma_{\rm L}}}{\Gamma_+^2t^2}=2\Gamma_+^{-2-\gamma_{\rm L}}t^{-2-\gamma_{\rm L}}=c_{+,2}t^{-2-\gamma_{\rm L}}.
\end{eqnarray}
In contrast, for the monotonically decreasing case, we have
\begin{equation}
\left\|xe^{-{\rm i}tH_0^{\Psi_-}}\phi\right\|^2\leqslant2\left\|x\phi\right\|^2+8nt^2\Psi'_-(\epsilon^2)^2\epsilon^2\|\phi\|^2.\label{Lem1_5}
\end{equation}
Therefore,
\begin{equation}
\frac{1}{\kappa}\left(V^{\rm L}e^{-{\rm i}tH_0^{\Psi_-}}\phi,e^{-{\rm i}tH_0^{\Psi_-}}\phi\right)\geqslant c_{-,1}t^{-\gamma_{\rm L}}\|\phi\|^2-c_{-,2}t^{-2-\gamma_{\rm L}}\left\|x\phi\right\|^2.
\end{equation}
holds for $\Gamma_-\in\mathbb{R}$ which satisfies
\begin{equation}
\Gamma_-\geqslant\max\left\{4\sqrt{n}\Psi'_-(\epsilon^2)\epsilon, 1\right\}.
\end{equation}
\end{proof}

%Lem2
\begin{Lem}\label{Lem2}
For $t>0$ and $N\in\mathbb{N}$, there exist positive constants $c_{\pm,3}$ and $c_{\pm,4}$ such that
\begin{equation}
\left\|V^{\rm L}e^{-{\rm i}tH_0^{\Psi_\pm}}\phi\right\|_{L^2(\mathbb{R}^n)}\leqslant c_{\pm,3}t^{-\gamma_{\rm L}}\|\phi\|_{L^2(\mathbb{R}^n)}+c_{\pm,4}t^{-N}\left\|\langle x\rangle^N\phi\right\|_{L^2(\mathbb{R}^n)}\label{Lem2_1}
\end{equation}
\end{Lem}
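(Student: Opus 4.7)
The plan is to split the free evolution $e^{-{\rm i}tH_0^{\Psi_\pm}}\phi$ into a ``classical'' and a ``non-classical'' piece using the same threshold already exploited in Proposition \ref{Prop1}. On the classical region the long-range potential $V^{\rm L}(x)=\kappa\langle x\rangle^{-\gamma_{\rm L}}$ enjoys a pointwise bound of order $t^{-\gamma_{\rm L}}$, while on the non-classical region the free evolution itself decays faster than any polynomial by Proposition \ref{Prop1}, so there we only use the boundedness $V^{\rm L}\in L^\infty(\mathbb{R}^n)$.

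Concretely, in the monotonically increasing case I would insert the partition $1=F(|x|/t>\Psi'_+(\epsilon^2)\epsilon)+F(|x|/t\leqslant\Psi'_+(\epsilon^2)\epsilon)$ and apply the triangle inequality to obtain
\begin{equation*}
\left\|V^{\rm L}e^{-{\rm i}tH_0^{\Psi_+}}\phi\right\|\leqslant\left\|V^{\rm L}F\!\left(\tfrac{|x|}{t}>\Psi'_+(\epsilon^2)\epsilon\right)e^{-{\rm i}tH_0^{\Psi_+}}\phi\right\|+\|V^{\rm L}\|_{L^\infty}\left\|F\!\left(\tfrac{|x|}{t}\leqslant\Psi'_+(\epsilon^2)\epsilon\right)e^{-{\rm i}tH_0^{\Psi_+}}\phi\right\|.
\end{equation*}
On the support of the first characteristic function one has $|x|>\Psi'_+(\epsilon^2)\epsilon\,t$, hence $\langle x\rangle^{-\gamma_{\rm L}}\leqslant C(\Psi'_+(\epsilon^2)\epsilon)^{-\gamma_{\rm L}}t^{-\gamma_{\rm L}}$ for $t\geqslant 1$; combined with the unitarity of $e^{-{\rm i}tH_0^{\Psi_+}}$ this yields the first term of \eqref{Lem2_1}, and the range $0<t<1$ can be absorbed into the constant since then $t^{-\gamma_{\rm L}}\geqslant 1$ and $V^{\rm L}$ is bounded. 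The second term is precisely what \eqref{prop1_1} controls, producing the $t^{-N}\|\langle x\rangle^N\phi\|$ contribution.

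The monotonically decreasing case proceeds identically after replacing the threshold $\Psi'_+(\epsilon^2)\epsilon$ with $\Psi'_-(R^2)R$ and invoking \eqref{prop1_2} in place of \eqref{prop1_1}. No serious obstacle arises in this lemma; the only point of care is to pick the threshold compatible with the direction of monotonicity of $\Psi'_\pm(\sigma^2)\sigma$ so that Proposition \ref{Prop1} applies on the non-classical region, exactly as in the existence argument of the previous section.
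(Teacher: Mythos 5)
Your proposal is correct and follows essentially the same route as the paper: the paper's proof also splits the free evolution with the characteristic functions $F(|x|/t\gtrless\Psi'_\pm(\cdot)\cdot)$ exactly as in the existence argument around \eqref{thm1_2}, bounding $V^{\rm L}$ pointwise by $(\Psi'_\pm(\cdot)\cdot)^{-\gamma_{\rm L}}t^{-\gamma_{\rm L}}$ on the classical region and invoking Proposition \ref{Prop1} on the complement. (Using $\langle x\rangle^{-\gamma_{\rm L}}\leqslant|x|^{-\gamma_{\rm L}}$ even makes your separate treatment of $0<t<1$ unnecessary.)
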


\begin{proof}
This proof follows in almost the same way as the proof for the existence of the wave operators (see \eqref{thm1_2}). For the monotonically increasing case, \eqref{Lem2_1} follows by setting $c_{+,3}=|\kappa|\left(\Psi'_-(\epsilon^2)\epsilon\right)^{-\gamma_{\rm L}}$ and $c_{+,4}=|\kappa|C_{+,N,\epsilon, R}$. For the monotonically decreasing case, \eqref{Lem2_1} follows by $c_{-,3}=|\kappa|\left(\Psi'_-(R^2\right)R)^{-\gamma_{\rm L}}$ and $c_{-,4}=|\kappa|C_{-,N,\epsilon, R}$. 
\end{proof}

We have now gathered everything required to prove the non-existence of the wave operators.

%proof of nonexistence
\begin{proof}[Proof of the nonexistence of the wave operators]
We assume that
\begin{equation}
\slim_{t\rightarrow\infty}e^{{\rm i}tH_{\rm L}^{\Psi_\pm}}e^{-{\rm i}tH_0^{\Psi_\pm}}
\end{equation}
exists and put
\begin{equation}
\phi_\pm=\lim_{t\rightarrow\infty}e^{{\rm i}tH_{\rm L}^{\Psi_\pm}}e^{-{\rm i}tH_0^{\Psi_\pm}}\phi\in L^2(\mathbb{R}^n).
\end{equation}
There exist $T_\pm>0$ such that
\begin{equation}
\left\|e^{{\rm i}tH_{\rm L}^{\Psi_\pm}}e^{-{\rm i}tH_0^{\Psi_\pm}}\phi-\phi_\pm\right\|\leqslant\frac{|\kappa|c_{\pm,1}}{2c_{\pm,3}}\|\phi\|
\end{equation}
for all $t\geqslant T_\pm$. We take $t_1$ and $t_2$ such that $t_2\geqslant t_1\geqslant\max\{T_\pm, 1\}$ and compute
\begin{eqnarray}
\lefteqn{\left|\left(\left\{e^{{\rm i}t_2H_{\rm L}^{\Psi_\pm}}e^{-{\rm i}t_2H_0^{\Psi_\pm}}-e^{{\rm i}t_1H_{\rm L}^{\Psi_\pm}}e^{-{\rm i}t_1H_0^{\Psi_\pm}}\right\}\phi,\phi_\pm\right) \right|}\nonumber\\
\Eqn{}=\left|\int_{t_1}^{t_2}\frac{d}{dt}\left(e^{{\rm i}tH_{\rm L}^{\Psi_\pm}}e^{-{\rm i}tH_0^{\Psi_\pm}}\phi,\phi_\pm\right)dt\right|=\left|\int_{t_1}^{t_2}\left(V^{\rm L}e^{-{\rm i}tH_0^{\Psi_\pm}}\phi,e^{-{\rm i}tH_{\rm L}^{\Psi_\pm}}\phi_\pm\right)dt\right|\nonumber\\
\Eqn{}=\left|\int_{t_1}^{t_2}\left(V^{\rm L}e^{-{\rm i}tH_0^{\Psi_\pm}}\phi,e^{-{\rm i}tH_0^{\Psi_\pm}}\phi\right)dt\right.\nonumber\\
\Eqn{}\hspace{30mm}+\left.\int_{t_1}^{t_2}\left(V^{\rm L}e^{-{\rm i}tH_0^{\Psi_\pm}}\phi,\left\{e^{-{\rm i}tH_{\rm L}^{\Psi_\pm}}\phi_\pm-e^{-{\rm i}tH_0^{\Psi_\pm}}\phi\right\}\right)dt\right|\nonumber\\
\Eqn{}\geqslant|\kappa|\int_{t_1}^{t_2}\frac{1}{\kappa}\left(V^{\rm L}e^{-{\rm i}tH_0^{\Psi_\pm}}\phi,e^{-{\rm i}tH_0^{\Psi_\pm}}\phi\right)dt\nonumber\\
\Eqn{}\hspace{30mm}-\int_{t_1}^{t_2}\left\|V^{\rm L}e^{-{\rm i}tH_0^{\Psi_\pm}}\phi\right\|\left\|e^{-{\rm i}tH_{\rm L}^{\Psi_\pm}}\phi_\pm-e^{-{\rm i}tH_0^{\Psi_\pm}}\phi\right\|dt.
\end{eqnarray}
By virtue of Lemmas \ref{Lem1} and \ref{Lem2}, we conclude that
\begin{eqnarray}
\lefteqn{2\|\phi\|\left\|\phi_\pm\right\|\geqslant\frac{|\kappa|c_{\pm,1}}{2}\|\phi\|^2\int_{t_1}^{t_2}t^{-\gamma_{\rm L}}dt}\nonumber\\
\Eqn{}-|\kappa|c_{\pm,2}\left\|x\phi\right\|^2\int_{t_1}^{t_2}t^{-2-\gamma_{\rm L}}dt-\frac{|\kappa|c_{\pm,1}c_{\pm,4}}{2c_{\pm,3}}\|\phi\|\left\|\langle x\rangle^N\phi\right\|\int_{t_1}^{t_2}t^{-N}dt\nonumber\\
\Eqn{}\longrightarrow\infty
\end{eqnarray}
as $t_2$ goes to infinity because $\gamma_{\rm L}\leqslant1$, and we can choose $N\in\mathbb{N}$ as $N\geqslant2$. This leads to a contradiction.
\end{proof}
%acknowledgments%%%%%%%%%%%%%%%%%%%%%
\noindent\textbf{Acknowledgments.} The first author was partially supported by the Grant-in-Aid for Young Scientists (B) \#16K17633 and Scientific Research (C) \#20K03625 from JSPS.

%references%%%%%%%%%%%%%%%%%%%%%%%%%
%%%%%%%%%%%%%%%%%%%%%%%%%%%%%%%


\begin{thebibliography}{99}
\bibitem{Do}Dollard, J.\ D.: Quantum-mechanical scattering theory for short-range and Coulomb interactions. \textit{Rocky Mountain J.\ Math.} \textbf{1}, no.\ 1, 5--81 (1971).
\bibitem{Gi}Giere, E.: Asymptotic completeness for functions of the Laplacian perturbed by potentials and obstacles. \textit{Math.\ Nachr.}\ \textbf{263/264}, 133--153 (2004).
\bibitem{HiLo1}Hiroshima, F., L\H{o}rinczi, J.: Lieb-Thirring bound for Schr\"odinger operators with Bernstein functions of the Laplacian. \textit{Commun.\ Stoch.\ Anal.} \textbf{6} (2012), no.\ 4, 589--602.
\bibitem{HiIcLo}Hiroshima, F., Ichinose, T., L\H{o}rinczi, J.: Path integral representation for Schr\"odinger operators with Bernstein functions of the Laplacian. \textit{Rev.\ Math.\ Phys.}\ \textbf{24}, no.\ 6, 1250013, 40 pp.  (2012).
\bibitem{Is1}Ishida, A.: The borderline of the short-range condition for the repulsive Hamiltonian. \textit{J.\ Math.\ Anal.\ Appl.} \textbf{438}, no.\ 1, 267--273 (2016).
\bibitem{Is2}Ishida, A.: Non-existence of standard wave operators for fractional Laplacian and slowly decaying potentials. \textit{East Asian J.\ Appl.\ Math.} \textbf{9}, no.\ 2, 233--240 (2019).
\bibitem{Is3}Ishida, A.: Propagation property and application to inverse scattering for fractional powers of the negative Laplacian. \textit{East Asian J.\ Appl.\ Math.} \textbf{10}, no.\ 1, 106--122 (2020).
\bibitem{Is4}Ishida, A.: Nonexistence result for wave operators in massive relativistic system, \textit{Springer Latin America series}. In press.
\bibitem{IsKa}Ishida, A., Kawamoto, M.: Existence and nonexistence of wave operators for time-decaying harmonic oscillators. \textit{Rep.\ Math.\ Phys.} \textbf{85}, no.\ 3, 335--350  (2020).
\bibitem{KaLo}Kaleta, K., L\H{o}rinczi, J.: Zero-Energy Bound State Decay for Non-local Schr\"odinger Operators. \textit{Comm.\ Math.\ Phys.} \textbf{374}, no.\ 3, 2151--2191 (2020).
\bibitem{Ki1}Kitada, H.: Scattering theory for Schr\"odinger equations with time-dependent potentials of long-range type. \textit{J.\ Fac.\ Sci.\ Univ.\ Tokyo Sect.\ IA Math.} \textbf{29}, no.\ 2, 353--369 (1982).
\bibitem{Ki2}Kitada, H.: Scattering theory for the fractional power of negative Laplacian. \textit{Jour.\ Abstr.\ Differ.\ Equ.\ Appl.} \textbf{1}, no.\ 1, 1--26 (2010).
\bibitem{Ki3}Kitada, H.: A remark on simple scattering theory. \textit{Commun.\ Math.\ Anal.} \textbf{11}, no.\ 2, 123--138 (2011).
\bibitem{LoHiBe}L\H{o}rinczi, J., Hiroshima, F., Betz, V.: Feynman--Kac-Type theorems and Gibbs measures on path space. With applications to rigorous quantum field theory. De Gruyter Studies in Mathematics, 34. Walter de Gruyter \& Co., Berlin (2011).
\bibitem{Oz}Ozawa, T.: Non-existence of wave operators for Stark effect Hamiltonians. \textit{Math.\ Z.} \textbf{207}, no.\ 3, 335--339 (1991).
\bibitem{ReSi}Reed, M., Simon, B.: Methods of Modern Mathematical Physics III, Scattering Theory. Academic Press, New York (1979).
\bibitem{Wa}Wada, K.: Absence of wave operators for one-dimensional quantum walks. \textit{Lett.\ Math.\ Phys.} \textbf{109}, no.\ 11, 2571--2583  (2019).
\end{thebibliography}
\end{document}